\documentclass[aps,pra,twocolumn,nofootinbib,floatfix,longbibliography,superscriptaddress]{revtex4-2}

\usepackage{amsmath,amssymb,amsthm,bbm,bm,color,dsfont,float,graphicx,hyperref,makecell,mathrsfs,mathtools,nicefrac,pgfplots,physics,tikz,times,txfonts}
\usepackage{comment}
\usepackage[normalem]{ulem}  
\usepackage{empheq}
\definecolor{equationcolor}{RGB}{222,94,100}
\definecolor{alecolor}{RGB}{238,33,80}

\hypersetup{urlcolor=equationcolor, colorlinks=true,citecolor=equationcolor,linkcolor=equationcolor} 
\pgfplotsset{compat=1.18} 

\DeclareFontFamily{U}{mathb}{\hyphenchar\font45}
\DeclareFontShape{U}{mathb}{m}{n}{
	<-6> mathb5 <6-7> mathb6 <7-8> mathb7
	<8-9> mathb8 <9-10> mathb9
	<10-12> mathb10 <12-> mathb12
}{}
\DeclareSymbolFont{mathb}{U}{mathb}{m}{n}
\DeclareMathSymbol{\ggcurly}{\mathrel}{mathb}{"CF}

\renewcommand{\v}[1]{\ensuremath{\boldsymbol #1}}

\makeatletter
\def\blfootnote{\gdef\@thefnmark{}\@footnotetext}
\makeatother

\theoremstyle{plain}
\newtheorem{thm}{Theorem}

\newtheorem{lem}[thm]{Lemma}

\newtheorem{defn}{Definition}

\def\>{\rangle}
\def\<{\langle}

\newlength\myindent
\begin{document}
\newcommand{\lanl}{Theoretical Division (T-4), Los Alamos National Laboratory, Los Alamos, New Mexico 87545, USA.} 
\author{Tanmoy Biswas}
\email{tanmoy.biswas23@gmail.com}
\affiliation{\lanl}

\title{Information and coherence as resources for work extraction\\ from unknown quantum state and providing quantum advantages}

\date{\today}

\begin{abstract}

The amount of extractable work from a physical system is fundamentally connected to the information available about its state, as illustrated by Maxwell’s demon and the Gibbs paradox. In standard thermodynamic protocols involving system–bath interactions, the maximum work is given by the free-energy difference between the initial state and the corresponding Gibbs state at the bath temperature. This motivates a natural question: does information also limit work extraction in closed quantum systems that does not involve a heat bath and work is obtained through unitary operation generated by a time-dependent Hamiltonian? While ergotropy quantifies the maximum work extractable via unitary operations, it assumes complete knowledge of the quantum state, typically requiring full state tomography. In realistic scenarios, however, only partial information is accessible. In this case, the relevant figure of merit is observational ergotropy, which depends explicitly on the measurement used to probe the system. We show that observational ergotropy decreases under classical post-processing of measurement outcomes, implying that fine grained measurements allow greater work extraction than coarse-grained ones. Moreover, maximizing observational ergotropy over all possible measurements recovers standard ergotropy, which decomposes into incoherent (classical) and coherent (quantum) contributions. Our results demonstrate that coherence in the measurement projectors constitutes the key resource, enabling work extraction beyond the incoherent limit and establishing coherence as the origin of quantum advantage in observational ergotropy extraction.

\end{abstract}
\maketitle

\begin{figure}[t]
    \centering
    \includegraphics[width=8cm]{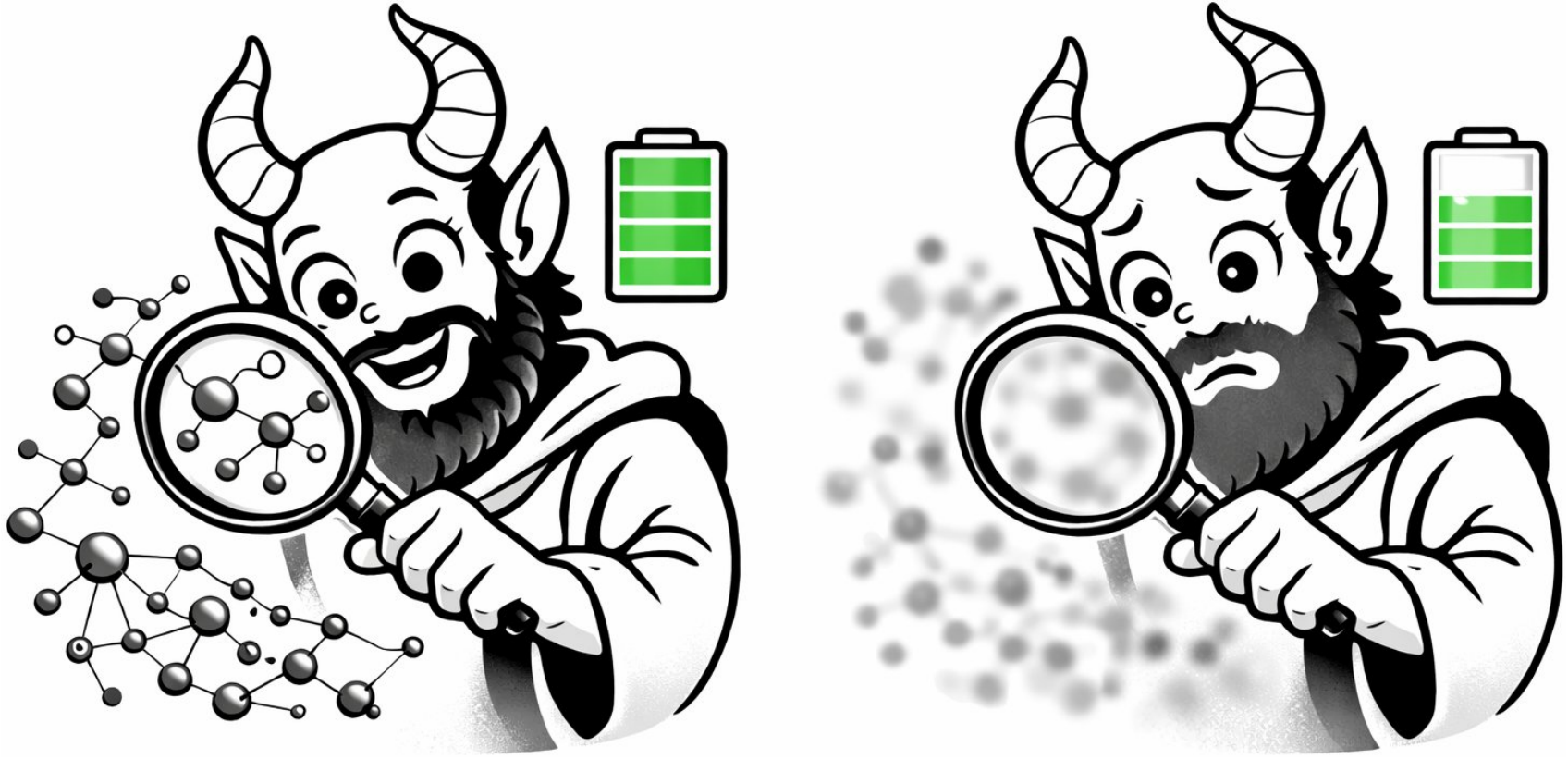}
    \caption{\label{Fig:Maxwell_Demon_glass}
    \textbf{Maxwell’s demon performing sharp and fuzzy measurements for ergotropy extraction.} In the case of a fine-grained (sharp) measurement, the demon acquires more information about the system and is therefore able to extract a larger amount of ergotropy. In contrast, when the same measurement is performed in a fuzzier (coarse-grained) manner, the reduced information available to the demon results in a smaller amount of extractable ergotropy. }
\end{figure}

\section{Introduction}
Information plays a fundamental role in both classical and quantum thermodynamics. In thermodynamics, the ability to extract work is directly tied to the use of available information \cite{JaynesReview,feynman1996computation,bennett1973logical,bennett1982thermodynamics,Parrondo2015}. The profound connection between information and work was first highlighted through the celebrated thought experiment of Maxwell’s demon and later emphasized by Landauer’s principle \cite{landauer1961irreversibility,bennett1987demon,Vedral_Review,Alex_review}. Together, these ideas demonstrate that information is not merely abstract knowledge, but a physical resource with fundamental thermodynamic significance. To acquire relevant information about the underlying state of the thermodynamic system, one must perform measurements on it \cite{bennett1982thermodynamics,Rio2011}. Crucially, the finer and more precise the measurement, the greater the information gained and, consequently, the larger the amount of work that can be extracted \cite{Faist2015,faist2016quantumcoarsegraininginformationtheoreticapproach,Shiraishi_2015}. In contrast, coarse-grained or noisy measurements reduce the available information and thereby limit the amount of extractable work.

The work extraction protocol based on the available information—namely, complete knowledge of the eigenvalues and corresponding eigenvectors of the underlying state—consists of an initial quench followed by an isothermal process performed in contact with a heat bath, thereby rendering the dynamics effectively open-system in nature. In this setting, the optimal protocol allows for the extraction of an amount of work equal to the difference between the free energy of the initial state and that of the corresponding Gibbs state at the temperature of the bath \cite{Rio2011,Szilard1929}. This naturally raises the following question: Does increased information about the underlying state enable greater work extraction, even in a closed-system setting where the system remains isolated from the heat bath? 

In a closed-system setting, the maximum amount of extractable work is characterized by ergotropy of the underlying state \cite{Ergotropy}. For a quantum state $\rho$ associated with a Hamiltonian $H$, the ergotropy is defined as
\begin{equation}\label{defn:Ergotropy}
    R(\rho):=\Tr(H\rho)-\Tr(H\Pi),
\end{equation}
where $\Pi$ denotes the passive state associated with $\rho$, defined through
\begin{equation}\label{passive_state_energy}
    \Tr(H\Pi):=\min_{U\in\mathcal{U}(d)}\Tr(HU\rho U^{\dagger}),
\end{equation}
and $\Tr(H\Pi)$ referred to as \emph{passive energy} of $\rho$ \cite{SilvaPRE1,Llobet,Biswas2022extractionof}. Assuming that the eigenvalues of $H$ are arranged in increasing order (i.e., $E_i \leq E_{i+1}$ for all $i$), it follows that
\begin{equation}\label{pass_energy_major}
    \Tr(H \Pi) = \lambda(H)^{T} \lambda^{\downarrow}(\rho),
\end{equation}
where $\lambda(H)$ is a column vector whose $i^{th}$ element is $E_i$ and $T$ denotes transpose. $\lambda^{\downarrow}(\rho)$ denotes the column vector of eigenvalues of $\rho$ arranged in decreasing order 
(i.e., $\lambda_i(\rho) \geq \lambda_{i+1}(\rho)$ for all $i$). Note that $\Pi$ is obtained from $\rho$ by a unitary transformation $S$ generated by a cyclic driving Hamiltonian $H(t)$,
\begin{eqnarray}
    S = \mathcal{T}\left[\exp\!\left(-i\int_0^\tau dt \, H(t)\right)\right],
\end{eqnarray}
with $H(0)=H(\tau)\equiv H$. Since the change in average energy arises solely from the time-dependent driving, ergotropy admits a natural interpretation as work extracted in a closed system \cite{Alicki_1979,Kosloff1984}. 

The unitary transformation \( S \) that achieves maximal work extraction i.e ergotropy—implicitly assumes complete knowledge of the spectral decomposition of the quantum state \( \rho \), including both its eigenvalues and corresponding eigenvectors. In practice, obtaining such information requires full quantum state tomography, whose experimental cost scales exponentially with system size and is therefore infeasible for large systems. This limitation motivates a more realistic framework in which the state is only partially known, or even entirely unknown. In this setting, the relevant notion of extractable work is \emph{observational ergotropy} introduced in Ref. \cite{SafranekBinderPRL}, defined as the work operationally accessible from the information obtained via a single coarse-grained measurement, without prior knowledge of the prepared state \cite{Wehrl1978,vonNeumann2010,SafranekPRA,Buscemi_2023}. This naturally raises two fundamental questions: whether having greater information about the underlying system always enables increased observational work extraction, and what properties a coarse-graining measurement must possess to provide a genuine quantum advantage in observational ergotropy extraction.


In this Letter, we show that observational ergotropy decreases under classical post-processing of the quantum measurements used to extract information about the underlying physical system. In particular, fine grained projective measurements enable a larger amount of observational ergotropy to be extracted than their classically post-processed (coarse-grained) counterparts, establishing that any loss of information at the level of measurement outcomes necessarily reduces the extractable work. Furthermore, we prove that observational ergotropy, when maximized over all fine-grained measurement and their classical post-processing, reduces to the standard ergotropy defined in Eq.~\eqref{defn:Ergotropy}. Since ergotropy admits a decomposition into incoherent (classical) and coherent contributions, we show that measurements performed in the Hamiltonian eigenbasis yield observational ergotropy equal to the incoherent part alone. This provides an operational characterization of measurements capable of extracting observational ergotropy beyond the classical limit: energetic coherence in the measurement projectors emerges as the essential resource enabling work extraction exceeding the incoherent ergotropy of the state.
\section{Preliminaries}
Before presenting the main results, we first introduce the essential concepts required for their understanding. We proceed by defining fine-grained measurements and coarse-grained measurements \cite{oftelie2025impactinformationquantumheat,Oszmaniec,Jozsa2003EntanglementCost,DAriano_2005}.
\begin{defn}[Fine-grained measurement]\label{Defn_fine_grained_mmt}
    A \emph{fine-grained measurement} is a projective measurement $\textbf{P}=\{P_i\}_{i=1}^d$ in which each projector has rank one, i.e.,
\begin{align}
    P_i P_j &= \delta_{ij} P_i, \qquad \sum_{i=1}^d P_i = \mathbb{I}, \\
    \text{and} \qquad \mathrm{Rank}(P_i) &= 1 \quad \forall\, i \in \{1,\ldots,d\}.
\end{align}
\end{defn}
In order to describe one measuremnt is less informative than the other, we need to introduce \emph{coarse-grained} measurement
\begin{defn}[Coarse-grained measurement]
A measurement $\mathbf{Q}=\{Q_i\}_{i=1}^{n}$ is called a \emph{coarse-grained} measurement if there exists a fine-grained measurement $\mathbf{P}=\{P_j\}_{j=1}^{d}$ and a classical post processing defined via column-stochastic matrix $D$ such that
\begin{equation}\label{classical_postprocessing}
    Q_i = \sum_{j=1}^{d} D_{i,j} P_j, \qquad \forall i ,
\end{equation}
where $D_{i,j} \ge 0$ for all $i,j$ and $\sum_i D_{i,j} = 1$ for every $j$.
\end{defn}
In other words, classical post-processing of a fine-grained measurement is a strategy in which, upon obtaining an outcome $j$, one outputs the final outcome $i$ with probability $D_{i,j}$ \cite{Heinosaari2022orderpreservingmaps}. We note that the coarse-grained measurement $\mathbf{Q}=\{Q_i\}$ defines a Positive Operator-Valued Measure (POVM), since
\begin{equation}
    \forall i,\quad Q_i \ge 0 
    \qquad \text{and} \qquad 
    \sum_i Q_i = \mathbb{I}.
\end{equation}
A special form of classical post-processing arises when the column-stochastic matrix $D$ has entries only in $\{0,1\}$; such matrices correspond to deterministic coarse-graining and are referred to as \emph{relabeling matrices}.

Observe that the number of elements in the coarse-grained measurement $\mathbf{Q}$ need not coincide with the number of elements in the fine-grained measurement $\mathbf{P}$ from which $\mathbf{Q}$ is obtained via classical post-processing. We denote the set of all coarse-grained measurements as $\mathcal{P}(d,n)$. Since any fine-grained measurement can be regarded as a trivial classical post-processing of itself—corresponding to the choice \( D_{i,j} = \delta_{ij} \) in Eq. \eqref{classical_postprocessing} —it follows that every fine-grained measurement is naturally included in \( \mathcal{P}(d,n) \).

Next, to describe an estimate of the initial state $\rho$ obtained from a single coarse-grained measurement, we introduce the notion of a coarse-grained state.

\begin{defn}[Coarse-grained state]
Let $\mathbf{M}=\{M_i\}_{i\in I}$ be a coarse-grained measurement. For any quantum state $\rho$, the corresponding coarse-grained state is defined as
\begin{align}
    \rho_{\mathrm{cg}}^{(\textbf{M})}
    := \sum_{i \in I} p_i \, \frac{M_i}{V_i},
\end{align}
where $p_i = \mathrm{Tr}(\rho M_i)$ denotes the probability of obtaining outcome $i$, and $V_i := \mathrm{Tr}(M_i)$ is the volume of the $i$-th coarse-graining element.
\end{defn}

We conclude with a remark on the interpretation of the coarse-grained state. 
Suppose that $M_i$ is the eigenprojector of the Hamiltonian corresponding to the energy eigenvalue $E_i$. 
Then the normalized operator $M_i/\Tr(M_i)$ represents the microcanonical ensemble, i.e., the state of maximal entropy compatible with the fixed energy value $E_i$. More generally, let $M_i$ denote the eigenprojector of a relevant observable associated with the eigenvalue $m_i$. 
The normalized state
\begin{equation}
    \frac{M_i}{V_i} = \frac{M_i}{\Tr(M_i)}
\end{equation}
corresponds to the state of maximal ignorance (maximum entropy in the information-theoretic sense) consistent with the knowledge that the measurement outcome is $m_i$. This is precisely the generalized microcanonical ensemble associated with $m_i$. 

Accordingly, the coarse-grained state can be interpreted as a statistical mixture of such microcanonical ensembles and is therefore completely characterized by the probabilities $p_i = \Tr(M_i \rho)$ together with the volumes $V_i = \Tr(M_i)$ of the corresponding eigenspaces. 
Furthermore, one can show that $\rho_{\mathrm{cg}}$ admits a natural Bayesian interpretation as the optimal estimate of $\rho$ based solely on coarse-grained information. In particular, it arises from the action of the Petz recovery map~\cite{Petz1,Petz2} as well as from its rotated variant~\cite{Junge2018}. A detailed review about coarse-grained states and its relevance in stochastic and quantum thermodynamics can be found in \cite{StrasbergWinter,SafranekTutorial}.

To characterize maximal work extraction or ergotropy—in situations where no prior knowledge about the underlying quantum state is available beyond the information obtained from a single coarse-grained measurement, we introduce the notion of \emph{observational ergotropy}, as proposed in Ref.~\cite{SafranekBinderPRL}. 
\begin{defn}[Observational Ergotropy]
   For any coarse-grained measurement $\mathbf{M}=\{M_i\}_{i\in I}$, and for any quantum state $\rho$ associated with Hamiltonian $H$, the \emph{observational ergotropy} $R(\rho,\mathbf{M})$ is defined as
\begin{align}\label{Observational_ergotropy_defn_Eq}
    R(\rho,\mathbf{M})
    := \Tr\!\left[H\left(\rho-\Pi_{\mathrm{cg}}^{(\textbf{M})}\right)\right],
\end{align}
where the coarse-grained passive state $\Pi_{\mathrm{cg}}^{(\textbf{M})}$ is defined using the following relation:
\begin{align}
    \min_{U\in\mathcal{U}(d)} \Tr\!\left(HU \rho_{\mathrm{cg}}^{(\textbf{M})} U^{\dagger}\right)
    := \Tr\!\left(H\Pi_{\mathrm{cg}}^{(\textbf{M})}\right).
\end{align}
Here $p_i=\Tr(\rho M_i)$ denotes the probability of outcome $i$, and $V_i:=\Tr(M_i)$ referred to as the volume associated with $M_i$.
\end{defn}

The procedure for extracting observational ergotropy, along with its connection to observational entropy in the asymptotic regime and to entanglement entropy, has been discussed in detail in Refs.~\cite{SafranekBinderPRL,Safranekarxiv}. We now proceed to present the main results.

\section{Main Results: }
\subsection{Role of information in observational ergotopy extraction}
\begin{thm}[Fine-grained measurements always yield higher observational ergotropy than coarse-grained measurements]\label{fmt}
    Consider a fine grained measurement $\v P:=\{P_i\}_{i=1}^d$ and the coarse-grained measurement $\textbf{Q}$ which is formed by classical post-processing of $\v P$ i.e., 
    \begin{equation}\label{Q_i}
        Q_i = \sum_{j}D_{i,j}P_j, \quad\forall i,
    \end{equation}
    where $D_{i,j}$ are elements of a column-stochastic matrix. Then, observational ergotropy satisfies the following inequality:
    \begin{equation}\label{Ineq_Observational_ergotropy_class}
        R(\rho, \textbf{P})\geq R(\rho, \textbf{Q}).
    \end{equation}
\end{thm}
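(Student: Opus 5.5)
The plan is to reduce inequality~\eqref{Ineq_Observational_ergotropy_class} to a majorization statement between the spectra of the two coarse-grained states. Both $R(\rho,\mathbf P)$ and $R(\rho,\mathbf Q)$ contain the same term $\Tr(H\rho)$. The claim is therefore equivalent to
\begin{equation*}
\Tr\bigl(H\Pi_{\mathrm{cg}}^{(\mathbf P)}\bigr)\le \Tr\bigl(H\Pi_{\mathrm{cg}}^{(\mathbf Q)}\bigr).
\end{equation*}
By Eq.~\eqref{pass_energy_major} each passive energy equals $\lambda(H)^T\lambda^{\downarrow}(\cdot)$ of the corresponding coarse-grained state, with $\lambda(H)$ sorted increasingly. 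So it suffices to show that $\lambda(\rho_{\mathrm{cg}}^{(\mathbf Q)})$ is majorized by $\lambda(\rho_{\mathrm{cg}}^{(\mathbf P)})$. The inequality then follows from the standard fact that $x\mapsto \lambda(H)^T x^{\downarrow}$ is Schur-concave. Concretely, by Abel summation it equals $\sum_k (E_k-E_{k+1})\sum_{m\le k}x^{\downarrow}_m + E_d\sum_m x_m$. The coefficients $E_k-E_{k+1}$ are nonpositive, and majorization means larger partial sums with equal total.

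Next I compute both coarse-grained states explicitly in the common eigenbasis of the rank-one projectors $P_j$. Since $V_j=\Tr P_j=1$, we have $\rho_{\mathrm{cg}}^{(\mathbf P)}=\sum_j q_j P_j$ with $q_j=\Tr(\rho P_j)$. For $\mathbf Q$, linearity gives $p_i=\Tr(\rho Q_i)=\sum_j D_{i,j}q_j$ and $V_i=\sum_j D_{i,j}$. Outcomes with $V_i=0$ have $Q_i=0$ and $p_i=0$, so they are discarded from the sum. Because every $Q_i$ is diagonal in the $\{P_k\}$ basis,
\begin{equation*}
\rho_{\mathrm{cg}}^{(\mathbf Q)}=\sum_k r_k P_k,\qquad r_k=\sum_j B_{k,j}\,q_j,\qquad B_{k,j}:=\sum_i \frac{D_{i,k}D_{i,j}}{V_i}.
\end{equation*}

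The key step is to check that $B$ is doubly stochastic. Its entries are nonnegative, and it is symmetric in $(k,j)$. Its column sums are
\begin{equation*}
\sum_k B_{k,j}=\sum_i \frac{D_{i,j}}{V_i}\sum_k D_{i,k}=\sum_i D_{i,j}=1,
\end{equation*}
and by symmetry its row sums are also $1$. Hence $r=Bq$ with $B$ doubly stochastic. By the Hardy--Littlewood--P\'olya theorem this gives $r\prec q$, which is exactly the spectral majorization needed, and the theorem follows.

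The main obstacle I expect is spotting this doubly stochastic structure. The coarse-graining map $q\mapsto r$ is not simply $D$: it is the composition of $D$ with the ``redistribution'' $D^T\mathrm{diag}(1/V)$. One must verify that the volume normalization makes this composition doubly stochastic rather than merely stochastic. The remaining pieces are routine: the zero-volume edge case, and invoking Schur-concavity through Eq.~\eqref{pass_energy_major}.
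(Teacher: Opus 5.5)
Your proposal is correct and follows essentially the same route as the paper. The paper's Lemma writes $\rho_{\mathrm{cg}}^{(\mathbf Q)}$ in the $\{P_j\}$ basis with weights $\mu=B\,p$, where $B_{j,m}=\sum_i D_{i,m}\,q(j|i)$ reduces for rank-one $P_j$ to exactly your $\sum_i D_{i,j}D_{i,m}/V_i$; it checks that $B$ is doubly stochastic and then invokes Schur concavity of the passive energy. The only differences are cosmetic: you get the row sums from the symmetry of $B$, prove Schur concavity by Abel summation rather than the paper's Birkhoff decomposition, and explicitly handle zero-volume outcomes, which the paper leaves implicit.
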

The above theorem shows that fine-grained measurements yield higher observational ergotropy than their classically post-processed (coarse-grained) counterparts. Note that, the inequality in Eq.~\eqref{Ineq_Observational_ergotropy_class} is in the same spirit as the Gibbs paradox: having more information about the underlying thermodynamic system enables one to extract more work than in a scenario where less information is available. The proof of the above theorem can be found in the end matter.

\subsection{Quantum advantage in ergotropy extraction} In the following, we address which types of coarse-grainings can provide a genuine quantum advantage in ergotropy extraction. To formulate this question precisely, we recall an important result from Ref.~\cite{FrancicaPRL}, which shows that the ergotropy of a state $\rho$ admits a decomposition into incoherent (classical) and coherent (quantum) contributions,
\begin{equation}\label{decomposition_ergotropy}
    R(\rho)= R_{\mathrm{incoherent}}(\rho)+ R_{\mathrm{coherent}}(\rho).
\end{equation}
Here,
\begin{align}\label{incoherent_ergotropy}
    R_{\mathrm{incoherent}}(\rho) = \Tr\left(H\Delta_{\rho}\right)-\min_{V\in\mathcal{U}(d)}\Tr\left(HV\Delta_{\rho}V^{\dagger}\right),
\end{align}
where $\Delta_{\rho}$ is dephased version of the state $\rho$ in the energy eigenbasis
\begin{equation}\label{dephased_state}
    \Delta_{\rho} = \sum_{i=1}^d \langle E_i|\rho|E_i\rangle\ketbra{E_i}{E_i},
\end{equation}
whereas $R_{\mathrm{coherent}}(\rho)$ defined using Eq. \eqref{defn:Ergotropy} and Eq. \eqref{decomposition_ergotropy}. From Eq.~\eqref{incoherent_ergotropy}, we observe that extracting incoherent ergotropy requires optimization only over permutation unitaries, since these act on the dephased state. 

Our goal is to characterize the resourceful coarse-grained measurements
for which the observational ergotropy exceeds the incoherent contribution, thereby certifying a genuine quantum advantage. 

To this end, we begin by characterizing those coarse-grained measurements that yield observational ergotropy equal to the incoherent ergotropy. We consider the set of all coarse-grained measurements $\mathcal{P}_{H}(d)$ that can be obtained via classical post-processing of the projective measurement defined by the eigenprojectors of the Hamiltonian. We refer to such measurements as \emph{energy-incoherent measurements}, since their elements are diagonal in the energy eigenbasis and thus contain no coherence with respect to the Hamiltonian \cite{Oszmaniec2019operational,Theurer}. 
More explicitly, any POVM $\mathbf{N}=\{N_i\}_{i=1}^n$ in $\mathcal{P}_{H}(d,n)$ can be written as
\begin{equation}\label{energy_inc_mmt}
    N_i = \sum_{j} q(i|j)\ketbra{E_j}{E_j},
\end{equation}
where $H=\sum_{j=1}^{d} E_j \ketbra{E_j}{E_j}$ is the spectral decomposition of the Hamiltonian and $q(i|j)$ is a column stochastic matrix describing the classical post-processing.
We will show that they constitute the coarse-grained measurements that results observational ergotropy equal to incoherent ergotropy in the following theorem:
\begin{thm}\label{thm_incoherent_ergotropy}
For any measurement $\mathbf{N} \in \mathcal{P}_{H}(d)$, we have
\begin{equation}\label{maximization_ergotropy}
    \max_{\mathbf{N}\in \mathcal{P}_{H}(d)} R(\rho,\mathbf{N})
    =
    R_{\mathrm{incoherent}}(\rho).
\end{equation}
Here, $\mathcal{P}_{H}(d)$ denotes the set of energy incoherent POVMs defined in Eq. \eqref{energy_inc_mmt}. 
\end{thm}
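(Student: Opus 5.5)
The plan has two steps. First we compute $R(\rho,\mathbf{P}_H)$ exactly for the fine-grained energy-eigenbasis measurement $\mathbf{P}_H=\{\ketbra{E_j}{E_j}\}_{j=1}^d$. Then we use Theorem~\ref{fmt} to show that no other element of $\mathcal{P}_H(d)$ can do better. Together these give \eqref{maximization_ergotropy}.

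\emph{Step 1 (achievability).} For $\mathbf{P}_H$ each volume is $V_j=\Tr(\ketbra{E_j}{E_j})=1$ and each outcome probability is $p_j=\langle E_j|\rho|E_j\rangle$. Hence
\begin{equation*}
\rho_{\mathrm{cg}}^{(\mathbf{P}_H)}=\sum_j \langle E_j|\rho|E_j\rangle\ketbra{E_j}{E_j}=\Delta_\rho,
\end{equation*}
with $\Delta_\rho$ as in Eq.~\eqref{dephased_state}. The coarse-grained passive energy is therefore $\min_{V}\Tr(HV\Delta_\rho V^\dagger)$. Since $H$ is diagonal in the energy eigenbasis, we also have $\Tr(H\rho)=\Tr(H\Delta_\rho)$. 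Substituting both into Eq.~\eqref{Observational_ergotropy_defn_Eq} gives $R(\rho,\mathbf{P}_H)=R_{\mathrm{incoherent}}(\rho)$ by Eq.~\eqref{incoherent_ergotropy}. This shows that the maximum is at least $R_{\mathrm{incoherent}}(\rho)$.

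\emph{Step 2 (optimality).} By Eq.~\eqref{energy_inc_mmt}, every $\mathbf{N}\in\mathcal{P}_H(d)$ has the form $N_i=\sum_j q(i|j)\ketbra{E_j}{E_j}$ with $q$ column-stochastic. This is exactly the post-processing relation \eqref{Q_i} with fine-grained measurement $\mathbf{P}_H$ and $D_{i,j}=q(i|j)$. Theorem~\ref{fmt} then gives $R(\rho,\mathbf{N})\le R(\rho,\mathbf{P}_H)=R_{\mathrm{incoherent}}(\rho)$. Taking the maximum over $\mathbf{N}$ and combining with Step 1 yields the equality.

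\emph{Main obstacle.} The one delicate point is degeneracy of $H$. If $H$ has a degenerate eigenvalue, the eigenbasis $\{\ket{E_j}\}$ is not unique, and the rank-one projectors in \eqref{energy_inc_mmt} must be read with respect to the fixed spectral decomposition $H=\sum_{j=1}^{d} E_j\ketbra{E_j}{E_j}$ that defines $\mathcal{P}_H(d)$. $\Delta_\rho$ should be taken in that same basis, as Eq.~\eqref{dephased_state} already does, so Steps 1 and 2 stay consistent. If instead one allowed measurements built from the higher-rank eigenprojectors of $H$, these are still post-processings of some rank-one eigenbasis refinement. For such a refinement, Step 1 applies verbatim in the chosen basis, so no additional argument is needed beyond fixing that basis once.
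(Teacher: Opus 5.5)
Your proof is correct and follows the paper's argument. Theorem~\ref{fmt} reduces the maximization over $\mathcal{P}_H(d)$ to the energy-eigenbasis measurement, whose coarse-grained state is $\Delta_\rho$ with $\Tr(H\rho)=\Tr(H\Delta_\rho)$. You also make explicit two points the paper leaves implicit: $D_{i,j}=q(i|j)$ exhibits every $\mathbf{N}$ as a post-processing of $\mathbf{P}_H$, and $\mathbf{P}_H$ itself lies in $\mathcal{P}_H(d)$.

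One caution on your degeneracy aside: suppose the admissible set consisted only of post-processings of the higher-rank spectral projectors of a degenerate $H$. Then $\mathbf{P}_H$ would not belong to it, and Step~1 would no longer give achievability, so equality can genuinely fail; only the upper bound from Theorem~\ref{fmt} survives. A concrete case is $H=\mathrm{diag}(0,0,1)$ with $\rho=\tfrac12(\ketbra{E_1}{E_1}+\ketbra{E_3}{E_3})$, where every block-constant measurement gives at most $1/4<1/2=R_{\mathrm{incoherent}}(\rho)$. The fixed rank-one basis of Eq.~\eqref{energy_inc_mmt} is therefore essential, not merely a convention.
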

\begin{proof}
Employing Theorem \ref{fmt}, we infer that the maximization in Eq.~\eqref{maximization_ergotropy} is achieved for the projective measurement 
$\textbf{E}=\{\ketbra{E_i}{E_i}\}_{i=1}^{d}$, where $\{\ket{E_i}\}_{i=1}^{d}$ denotes the energy eigenbasis of the Hamiltonian. Then, the observational ergotropy is given by
\begin{equation}
    R(\rho,\textbf{E})=\Tr(H\rho)-\Tr(H\Pi^{(\textbf{E})}_{\text{cg}})
\end{equation}
where $\Pi^{(\textbf{E})}_{\text{cg}}$ is the passive version of the coarse-grained states obtained from the projective measurement $\textbf{E}$,
\begin{equation}
    \rho_{\text{cg}}^{(\textbf{E})} = \sum_{i=1}^d \langle E_i|\rho|E_i\rangle\ketbra{E_i}{E_i}.
\end{equation}
We observe that $\rho_{\text{cg}}^{(\textbf{E})}$ is the dephased version of the state $\rho$ in the energy eigenbasis i.e., $\rho_{\text{cg}}^{(\textbf{E})}=\Delta_{\rho}$ as can be seen from Eq. \eqref{dephased_state}. As a consequence, the average energy satisfies the following equality:
\begin{equation}\label{equality_theorem_inc}
    \Tr(H\rho) = \Tr(H\rho_{\mathrm{cg}}^{(\textbf{E})}).
\end{equation}
This allow us to write the following:
\begin{align}
    R(\rho,\textbf{E})&=\Tr(H\rho)-\Tr(H\Pi^{(\textbf{E})}_{\text{cg}})\nonumber\\ & = \Tr(H\rho)-\Tr(H\rho_{\text{cg}}^{(\textbf{E})})+\Tr(H\rho_{\text{cg}}^{(\textbf{E})})-\Tr(H\Pi^{(\textbf{E})}_{\text{cg}})\nonumber\\
    &= \Tr(H\rho_{\text{cg}}^{(\textbf{E})})-\Tr(H\Pi^{(\textbf{E})}_{\text{cg}})=R_{\text{incoherent}}(\rho),
\end{align}
where to write the third equality, we use Eq. \eqref{equality_theorem_inc}.
\end{proof}

As we have mentioned earlier that coarse-graining in the Hamiltonian eigenbasis simply yields the dephased version of the initial state. Such a procedure is completely insensitive to energetic coherences and, consequently, to the coherent contribution to the ergotropy. 

It follows from theorem \ref{thm_incoherent_ergotropy}, in order to surpass the incoherent (classical) contribution to ergotropy, the coarse-grained measurements must itself possess coherence in the energy eigenbasis \cite{Lostaglio2015,StudziskiHorodecki,KorzekwaPRX,Lobejko2021}. Equivalently, only measurements whose elements are not diagonal in the Hamiltonian eigenbasis can provide a genuine quantum advantage. In this sense the energy-coherent coarse-grainings are resourceful \cite{PlenioPRL,Stretsov_Review}. To end this, we show that maximizing the observational ergotropy over all coarse-grained measurements (This also includes coarse-grainings that contain coherence in the energy eigenbasis) recovers the full ergotropy of the state.

\begin{thm}\label{theorem3}
    For any measurement $M\in\mathcal{P}(d,n)$
    \begin{equation}\label{max1}
        \max_{\textbf{M}\in\mathcal{P}(d,n)} R(\rho, M)=R(\rho),
    \end{equation}
    where $\mathcal{P}(d,n)$ denote the set of all fine-grained measurement and their all possible classical postprocessing.
\end{thm}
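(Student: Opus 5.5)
Our plan is to establish two inequalities: an upper bound $R(\rho,\mathbf{M})\le R(\rho)$ valid for every $\mathbf{M}\in\mathcal{P}(d,n)$, and a matching lower bound obtained by exhibiting one measurement that attains $R(\rho)$.

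\emph{Upper bound.} By Theorem~\ref{fmt}, any coarse-grained $\mathbf{Q}$ obtained by post-processing a fine-grained $\mathbf{P}$ satisfies $R(\rho,\mathbf{Q})\le R(\rho,\mathbf{P})$. It therefore suffices to bound $R(\rho,\mathbf{P})$ for rank-one projective measurements $\mathbf{P}=\{\ketbra{\phi_i}{\phi_i}\}_{i=1}^d$. In this case $V_i=1$, and
\begin{equation}
\rho_{\mathrm{cg}}^{(\mathbf{P})}=\sum_i \langle\phi_i|\rho|\phi_i\rangle\ketbra{\phi_i}{\phi_i}
\end{equation}
is the pinching of $\rho$ in the basis $\{\ket{\phi_i}\}$. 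Its spectrum is $p=(\langle\phi_i|\rho|\phi_i\rangle)_i$, which is the diagonal of $\rho$ in that basis. By the Schur--Horn theorem, $p\prec\lambda(\rho)$. Since $E$ is sorted increasingly, the passive energy $\lambda(H)^T x^{\downarrow}$ is Schur-concave in $x$. Hence $\lambda(H)^T p^{\downarrow}\ge\lambda(H)^T\lambda^{\downarrow}(\rho)$, that is, $\Tr(H\Pi_{\mathrm{cg}}^{(\mathbf{P})})\ge\Tr(H\Pi)$. Using Eq.~\eqref{pass_energy_major}, this gives $R(\rho,\mathbf{P})\le R(\rho)$.

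For the Schur-concavity step we would argue as follows. Write $\lambda(H)^Tx^{\downarrow}=\sum_k (E_k-E_{k+1})\sum_{j\le k}x^{\downarrow}_j+E_d\sum_j x_j$. Each coefficient $E_k-E_{k+1}$ is nonpositive, and the partial sums $\sum_{j\le k}x^{\downarrow}_j$ increase under majorization, while the total $\sum_j x_j$ is fixed. So majorization can only lower the sum, which is the required inequality. This Abel-summation argument is the only non-trivial step in the upper bound.

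\emph{Achievability.} We choose $\mathbf{P}$ to be the eigenbasis projectors of $\rho$, so that $\ket{\phi_i}$ are eigenvectors of $\rho$. Then $\rho_{\mathrm{cg}}^{(\mathbf{P})}=\rho$ and hence $\Pi_{\mathrm{cg}}^{(\mathbf{P})}=\Pi$, which gives $R(\rho,\mathbf{P})=R(\rho)$. This $\mathbf{P}$ is fine-grained and thus lies in $\mathcal{P}(d,n)$ via the trivial post-processing $D=\mathbb{I}$. If $\rho$ has degenerate eigenvalues, we pick any eigenbasis; the pinching still returns $\rho$ exactly.

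Combining the two bounds, the maximum over $\mathcal{P}(d,n)$ exists, is attained at the eigenbasis measurement of $\rho$, and equals $R(\rho)$, which proves Eq.~\eqref{max1}. The main obstacle is not technical but one of bookkeeping. The reduction of general POVMs to fine-grained ones rests entirely on Theorem~\ref{fmt}, so we must ensure that every $\mathbf{M}\in\mathcal{P}(d,n)$ arises, by definition, as a post-processing of some fine-grained $\mathbf{P}$; this holds by the definition of coarse-grained measurement. Apart from that, the argument reduces to the majorization of the diagonal by the spectrum.
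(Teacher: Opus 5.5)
Your proof is correct and follows essentially the same route as the paper. You reduce to fine-grained measurements via Theorem~\ref{fmt}, show that the diagonal of $\rho$ in the measurement basis is majorized by $\lambda(\rho)$, apply Schur-concavity of passive energy, and attain $R(\rho)$ with the eigenbasis of $\rho$. The paper does the majorization step by exhibiting the unistochastic matrix $D_{k,i}=|\langle k|V|i\rangle|^2$, which is exactly the easy direction of the Schur--Horn theorem you invoke.

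The remaining differences are cosmetic. You prove Schur-concavity by Abel summation, which also covers degenerate levels $E_k\le E_{k+1}$, where the end matter uses a Birkhoff decomposition. You also bound every fine-grained $\mathbf{P}$ directly instead of first positing a maximizer $\mathbf{M}^{*}$.
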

\begin{proof}
    We begin by observing that the maximization in Eq.~\eqref{max1} is achieved by a fine-grained measurement, as follows from Theorem~\ref{fmt}. For a given quantum state $\rho$, let $\mathbf{M}^{*}=\{\ketbra{k}{k}\}_{k=1}^{d}$ denote the fine-grained measurement such that:
\begin{equation}\label{max_observational_ergotropy}
    \max_{\mathbf{M}\in\mathcal{P}(d)} R(\rho,\mathbf{M})
    = R(\rho,\mathbf{M}^{*}).
\end{equation}
The coarse-grained state corresponding to $\mathbf{M}^{*}$ is then given by
\begin{equation}\label{rho_cgm}
    \rho_{\mathrm{cg}}^{(\textbf{M}^{*})}
    = \sum_{k=1}^{d} p_k \ketbra{k}{k},
    \qquad
    p_k = \langle k|\rho|k\rangle .
\end{equation}

First, choosing $\mathbf{M}$ to be the eigenbasis of $\rho$, it follows immediately that $R(\rho,\mathbf{M}) = R(\rho)$. Hence, employing Eq. \eqref{max_observational_ergotropy}
\begin{equation}\label{Equality1}
    R(\rho,\mathbf{M}^{*}) \geq R(\rho,\mathbf{M}) = R(\rho).
\end{equation}

We now show that $R(\rho,\mathbf{M}^{*}) \leq R(\rho)$. Let the spectral decomposition of $\rho$ be
\begin{equation}
    \rho = \sum_{i=1}^{d} \alpha_i \ketbra{\alpha_i}{\alpha_i}.
\end{equation}
We first prove that $\lambda(\rho)\succ \lambda(\rho_{\mathrm{cg}}^{(\textbf{M}^{*})})$, where
$\lambda(\rho)=(\alpha_1,\ldots,\alpha_d)^T$ and
$\lambda(\rho_{\mathrm{cg}}^{(\textbf{M}^{*})})=(p_1,\ldots,p_d)^T$ (See section \ref{Review_maj} of the end matter).
Using Eq. \eqref{rho_cgm}, we can write
\begin{align}
    p_k
    &= \langle k|\rho_{\mathrm{cg}}^{(\textbf{M}^{*})}|k\rangle
      = \sum_{i=1}^{d} \alpha_i |\langle k|\alpha_i\rangle|^2 \nonumber \\
    &= \sum_{i=1}^{d} \alpha_i |\langle k|V|i\rangle|^2
      = \sum_{i=1}^{d} D_{k,i}\alpha_i ,
\end{align}
where $V$ is the unitary relating the eigenbasis of $\rho$ to $\{\ket{k}\}$, and
$D_{k,i}:=|\langle k|V|i\rangle|^2$. Since $D_{k,i}\geq 0$ and
\begin{equation}
    \sum_{i=1}^{d} D_{k,i}=1,
    \qquad
    \sum_{k=1}^{d} D_{k,i}=1,
\end{equation}
the matrix $D$ is bistochastic. This implies
$\lambda(\rho)\succ \lambda(\rho_{\mathrm{cg}}^{(\textbf{M}^*)})$.
By using Schur concavity of passive energy proven in Theorem~\ref{SCofOE} of the end matter, we can conclude that the passive energy of
$\rho_{\mathrm{cg}}^{(\textbf{M}^*)}$ is larger than that of $\rho$, i.e.,
\begin{equation}
    \Tr(H\Pi_{\mathrm{cg}}^{(\textbf{M}^*)})
    \leq
    \Tr(H\Pi_{\mathrm{cg}}),
\end{equation}
which yields
\begin{equation}\label{Equality2}
    R(\rho)
    = \Tr(H\rho)-\Tr(H\Pi_{\mathrm{cg}})
    \geq
    \Tr(H\rho)-\Tr(H\Pi_{\mathrm{cg}}^{(\textbf{M}^*)})
    = R(\rho,\mathbf{M}^{*}).
\end{equation}

Combining Eqs.~\eqref{Equality1} and~\eqref{Equality2} completes the proof.
\end{proof}
Therefore, from Theorem~\ref{theorem3}, we see that coherence associated with the projectors of a coarse-grained measurement in the energy eigenbasis is crucial, as it enters the construction of the coarse-grained states and consequently enables the observational ergotropy to exceed the incoherent ergotropy, thereby providing a genuine quantum advantage.

\section{Discussion}
In conclusion, we develop an operational framework that clarifies how accessible information fundamentally constrains work extraction in closed quantum systems. We show that observational ergotropy—quantifying the maximum extractable work when the underlying state is partially or completely unknown—monotonically decreases under classical post-processing. Thus, any information loss induced by coarse-graining inevitably reduces the extractable work. Our results therefore place information at the heart of work extraction in closed systems, where no heat bath is involved. In this sense, our findings echo the reformulation of the Gibbs paradox by E.~T.~Jaynes, who emphasized that \emph{``the amount of useful work that we can extract from any system depends---obviously and necessarily---on how much `subjective' information we have about its microstate, because that tells us which interactions will extract energy and which will not''} \cite{Jaynes1992}. Here, however, the work extraction protocol operates in the closed system regime i.e., entirely without coupling to a thermal reservoir, highlighting that the informational origin of thermodynamic performance persists even in fully closed quantum settings.

Moreover, we show that optimizing observational ergotropy over all coarse-grained measurements recovers the standard notion of ergotropy, thereby establishing a direct connection between observational and standard ergotropy. By decomposing ergotropy into incoherent and coherent contributions, we precisely identify the origin of quantum advantages in observational ergotropy extraction: only measurements that possess coherence in the energy eigenbasis can access the coherent part of ergotropy. This reveals that coherence in the measurement operators constitutes a genuine thermodynamic resource, responsible for enabling quantum advantages in observational ergotropy extraction. Overall, our results deepen the interplay between information theory, quantum coherence, and thermodynamics, and offer a clear operational perspective on how measurement resources fundamentally determine thermodynamic performance in quantum systems.

Our work opens several avenues for future research. In particular, it would be interesting to investigate how observational ergotropy applies to many-body quantum batteries, where the energy levels are often so closely spaced that full ergotropy extraction becomes impractical due to the requirement of extremely precise control \cite{Gyhm,RossiniPRL,GhoshChandaPRA,CampaioliRev}. In such settings, observational ergotropy provides a more realistic and operationally meaningful figure of merit. It would therefore be valuable to characterize which classes of coarse-grained measurements are most effective for specific quantum battery models, and to understand how the battery capacity of such models is influenced by the coherence present in the projectors defining the coarse-grained measurement. Another important research direction is to quantify the energetic cost of performing such measurements and to understand how this cost depends on their sharpness. This would enable us to establish a clear trade-off between the energy invested in the measurement process and the amount of ergotropy that can be extracted, thereby providing a more complete thermodynamic assessment of measurement-assisted work extraction \cite{xuereb2024resources,abdelkhalek2018fundamentalenergycostquantum,Guryanova2020idealprojective}. It would also be of considerable interest to investigate the roles and potential quantum advantages of other non-classical resources—such as entanglement and magic—in the extraction of observational ergotropy \cite{Francica2017,BiswasDattaPRL,biswas2025steerablequantumcorrelationsprovide,junior2025tradingathermalitynonstabiliserness,HsiehPRL}. Finally, it will be very interesting to explore the roles  of observational ergotropy in heat engines \cite{LutzPRL,PandaPRA,ElouradJordan,BiswasPRL,BiswasPRE}.

\section*{Acknowledgment} T.B acknowledges Luis Pedro Garcia-Pintos, A. de Oliveira Junior and Dominik Safranek for insightful discussions and comments.  This work was supported by the U.S. Department of Energy, Office of Science, Advanced Scientific Computing Research program under project TEQA.
\bibliography{literature}
\newpage

\section*{End Matter}
\subsection{Brief overview of majorization theory}\label{Review_maj}
In this section, we briefly review majorization theory required for understanding the technical proofs. A detailed discussion of majorization theory can be found in Ref. \cite{MarshallOlkinArnold2011,Bhatia1997,Gour_2025,GOUR20151}.
\begin{defn}[Majorization]
Let $x,y \in \mathbb{R}^n$, and let $x^{\downarrow}$ and $y^{\downarrow}$ denote the vectors obtained by rearranging their components in non-increasing order.  
We say that $x$ is \emph{majorized} by $y$, written $x \prec y$, if
\begin{align}
\sum_{i=1}^{k} x_i^{\downarrow}
\;\le\;
\sum_{i=1}^{k} y_i^{\downarrow}
\quad \text{for all } k=1,\dots,n-1,
\end{align}
and
\begin{align}
\sum_{i=1}^{n} x_i^{\downarrow}
=
\sum_{i=1}^{n} y_i^{\downarrow}.
\end{align}
\end{defn}

Next, we state a set of equivalent conditions that characterize the majorization relation $x \prec y$ in the following theorem.
\begin{thm}\label{thm_majorization_equvt}
For $x,y \in \mathbb{R}^n$, the following are equivalent:

\begin{enumerate}
    \item $x \prec y$.

    \item There exists a bistochastic matrix $B$ such that
    \begin{align}
        & x = B y \quad\text{where}\quad \forall \;i,j\;\; B_{i,j}\geq 0 \nonumber\\&\text{and}\quad \sum_{i}B_{i,j}=\sum_{j}B_{i,j}=1.
    \end{align}

    \item $x$ lies in the convex hull of permutations of $y$, i.e.,
    \begin{align}
        x = \sum_{j} \pi_j P_j y,
    \end{align}
    where $\{p_j\}$ is a probability distribution and $\pi_j$ are permutation matrices.

    \item For every Schur-concave function $f$,
    \begin{align}
        x \prec y \;\Rightarrow\; f(x) \ge f(y),
    \end{align}
    whereas for every Schur-convex function $g$
    \begin{align}
        x \prec y \;\Rightarrow\; g(x) \le g(y).
    \end{align}
\end{enumerate}
\end{thm}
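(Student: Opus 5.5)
The statement to prove is the equivalence of four characterizations of $x\prec y$, i.e.\ Theorem~\ref{thm_majorization_equvt}. I would prove it through the cycle $1\Rightarrow 3\Rightarrow 2\Rightarrow 1$, and then attach item~4 to item~1.

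\emph{Step $3\Rightarrow 2$.} This is immediate. Permutation matrices are bistochastic, and bistochasticity is preserved under convex combinations. So $B:=\sum_j p_j\pi_j$ is bistochastic and satisfies $x=By$.

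\emph{Step $2\Rightarrow 1$.} By permutation invariance I may assume $x=x^{\downarrow}$ and $y=y^{\downarrow}$, replacing $B$ by $\pi B\pi'$, which is still bistochastic. Fix $k$. Then
\begin{equation}
\sum_{i\le k}x_i=\sum_j t_j y_j,\qquad t_j:=\sum_{i\le k}B_{i,j}\in[0,1],\qquad \sum_j t_j=k.
\end{equation}
The difference $\sum_{j\le k}y_j-\sum_j t_jy_j$ splits as $\sum_{j\le k}(1-t_j)y_j-\sum_{j>k}t_jy_j$. Since $y$ is decreasing, this is at least
\begin{equation}
y_k\Big[\sum_{j\le k}(1-t_j)-\sum_{j>k}t_j\Big]=0.
\end{equation}
Equality of the total sums follows from the column sums of $B$ being one.

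\emph{Step $1\Rightarrow 3$.} This is the main obstacle. I would use the classical T-transform induction of Hardy--Littlewood--P\'olya, inducting on the number of indices where $x^{\downarrow}$ and $y^{\downarrow}$ differ.
\begin{itemize}
\item Assume $x\ne y$ and both are sorted. Let $j$ be the largest index with $y_j>x_j$, and let $k>j$ be the smallest index with $y_k<x_k$. Such indices exist because of the partial-sum inequalities and the equal totals.
\item Set $\delta=\min(y_j-x_j,\;x_k-y_k)$. Define $y'=(1-t)y+t\,\tau_{jk}y$, where $\tau_{jk}$ transposes coordinates $j,k$ and $t=\delta/(y_j-y_k)$. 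This moves mass $\delta$ from coordinate $j$ to coordinate $k$.
\item Check that $x\prec y'\prec y$ still holds and that $y'$ agrees with $x$ in one more coordinate.
\end{itemize}
Composing the $T$-transforms expresses $x$ as a product of convex combinations of permutations applied to $y$. Expanding the product gives a convex combination of permutation matrices, which is item~3. The bookkeeping that the partial sums remain ordered after each step is the delicate part.

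\emph{Item 4.} I read item~4 as the standard fact that $x\prec y$ implies $f(x)\ge f(y)$ for Schur-concave $f$ (and the reverse inequality for Schur-convex $g$). This holds by definition of Schur-concavity, namely monotonicity with respect to $\prec$.

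If item~4 is instead meant as an equivalence, I would prove the converse by testing with specific Schur-convex functions. For instance, taking $g(z)=\sum_{i\le k}z_i^{\downarrow}$ and $\pm\sum_i z_i$ recovers the inequalities of item~1 directly. Alternatively, the convex functions $\sum_i(z_i-c)_+$ recover item~1 via the standard tail characterization.
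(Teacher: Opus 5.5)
Your proposal is correct. There is no in-paper proof to compare it with, though: the paper states this theorem as background and defers to the majorization literature (Marshall--Olkin--Arnold, Bhatia).

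What you give is the classical Hardy--Littlewood--P\'olya/Muirhead argument, and it goes through. The $2\Rightarrow1$ step via $t_j\in[0,1]$ and $\sum_j t_j=k$ is complete as written. The bookkeeping you defer in $1\Rightarrow3$ closes once you record two consequences of the extremal choice of $j,k$:
\begin{itemize}
\item $x_i=y_i$ for all $j<i<k$;
\item $y_j>x_j\ge x_k>y_k$, so $t$ is well defined.
\end{itemize}
With these, $y'$ is still non-increasing. Its partial sums differ from those of $y$ only for $j\le m<k$, where they drop by $\delta\le y_j-x_j\le\sum_{i\le m}(y_i-x_i)$. Hence $x\prec y'$, and the number of disagreeing coordinates strictly decreases.

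Your treatment of item~4 is also right. As printed it is an implication that holds by the definition of Schur-convexity/concavity, and your test functions supply the converse if an equivalence is intended.

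As for what each route buys, citing the literature gives brevity. It also implicitly supplies the full Birkhoff--von Neumann theorem, which the paper actually invokes in proving Theorem~\ref{SCofOE} when it writes $B=\sum_j\alpha_j\pi_j$. Your cycle yields only the vector-level statement $By=\sum_\sigma c_\sigma\sigma y$, with weights depending on $x$ and $y$, not a decomposition of $B$ itself. That is all the argument there needs, since only $B\lambda(\rho_1)$ enters. So your route makes the end matter self-contained without Birkhoff's theorem.
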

\subsection{Proof of theorem \ref{fmt}}
To prove Theorem \ref{fmt}, we proceed by proving two important results as follows:
\begin{lem}\label{fmt_lem}
    Consider a fine-grained measurement $\v P:=\{P_i\}_{i=1}^d$ and the coarse-grained measurement $\textbf{Q}$ which is formed by classical post-processing of $\v P$ i.e., 
    \begin{equation}\label{Q_i}
        Q_i = \sum_{j}D_{i,j}P_j, \quad\forall i,
    \end{equation}
    where $D_{i,j}$ are elements of a stochastic matrix.
    Then the spectrum of the coarse-grained state $\rho_{\text{cg}}^{(\textbf{P})}$ and $\rho_{\text{cg}}^{(\textbf{Q})}$ satisfy the following majorizaton condition:
    \begin{equation}
        \lambda(\rho_{\text{cg}}^{(\textbf{P})})\succ \lambda(\rho_{\text{cg}}^{(\textbf{Q})}).
    \end{equation}
\end{lem}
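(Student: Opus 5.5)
The plan is to exploit the fact that every object in the statement is diagonal in one common basis. This reduces the claim to a statement about probability vectors, which can then be settled with criterion~2 of Theorem~\ref{thm_majorization_equvt}: we exhibit a bistochastic matrix $B$ with $\lambda(\rho_{\text{cg}}^{(\textbf{Q})}) = B\,\lambda(\rho_{\text{cg}}^{(\textbf{P})})$.

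\textbf{Step 1: diagonalize everything in one basis.} Since $\v P$ is fine-grained, we write $P_j=\ketbra{j}{j}$ for an orthonormal basis $\{\ket{j}\}_{j=1}^d$. Then $V_j=\Tr(P_j)=1$ and $p_j=\langle j|\rho|j\rangle$, so
\begin{equation}
\rho_{\text{cg}}^{(\textbf{P})}=\sum_j p_j\ketbra{j}{j},
\end{equation}
whose spectrum is exactly $(p_1,\ldots,p_d)$. By Eq.~\eqref{Q_i}, each $Q_i=\sum_k D_{i,k}\ketbra{k}{k}$ is diagonal in the same basis. Its volume and outcome probability are
\begin{equation}
V_i=\sum_k D_{i,k}, \qquad q_i=\Tr(\rho Q_i)=\sum_j D_{i,j}p_j .
\end{equation}

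\textbf{Step 2: compute the spectrum of the coarse-grained state.} Substituting into the definition gives
\begin{equation}
\rho_{\text{cg}}^{(\textbf{Q})}=\sum_i \frac{q_i}{V_i}\sum_k D_{i,k}\ketbra{k}{k}.
\end{equation}
This operator is again diagonal in $\{\ket{k}\}$, with eigenvalues
\begin{equation}
r_k=\sum_j B_{k,j}\,p_j,\qquad B_{k,j}:=\sum_i \frac{D_{i,k}D_{i,j}}{V_i}.
\end{equation}

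\textbf{Step 3: show that $B$ is bistochastic.} Nonnegativity is immediate, and $B$ is symmetric in $(k,j)$. Column sums are handled by column-stochasticity together with the definition of $V_i$:
\begin{equation}
\sum_k B_{k,j}=\sum_i \frac{D_{i,j}}{V_i}\sum_k D_{i,k}=\sum_i D_{i,j}=1 .
\end{equation}
Symmetry then gives unit row sums as well. Criterion~2 of Theorem~\ref{thm_majorization_equvt} therefore yields $r\prec p$, which is the claimed relation $\lambda(\rho_{\text{cg}}^{(\textbf{P})})\succ\lambda(\rho_{\text{cg}}^{(\textbf{Q})})$.

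\textbf{Expected obstacle.} The computation itself is routine. The only delicate point is the possibility of outcomes with $V_i=0$, for which $M_i/V_i$ is ill-defined. Such an outcome has $D_{i,k}=0$ for all $k$, so $Q_i=0$ and $q_i=0$. I would therefore adopt the convention that these outcomes are dropped from both the definition of $\rho_{\text{cg}}^{(\textbf{Q})}$ and the sum defining $B$. The column-sum identity survives, because each dropped term carries a factor $D_{i,j}=0$.

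The other point to flag explicitly is why the argument works at all: the fact that $\v P$ and $\v Q$ are simultaneously diagonal is what lets us compare spectra directly. Without it, one would have to reason about eigenvalues of non-commuting operators.
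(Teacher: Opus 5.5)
Your proof is correct and is essentially the paper's argument: since $\Tr(P_j)=1$, the paper's matrix $B_{j,m}=\sum_i D_{i,m}\,q(j|i)$ with $q(j|i)=D_{i,j}/\sum_k D_{i,k}$ is exactly your symmetric $B_{k,j}=\sum_i D_{i,k}D_{i,j}/V_i$, and both proofs conclude via bistochasticity of $B$ and criterion~2 of Theorem~\ref{thm_majorization_equvt}. The only differences are cosmetic: you obtain the row sums from symmetry instead of computing them directly, and you explicitly handle outcomes with $V_i=0$, which the paper does not address.
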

\begin{proof}
We begin by writing the coarse-grained states arising from $\v P$
\begin{equation}\label{cg_P}
    \rho^{(\textbf{P})}_{cg}= \sum_{j}\Tr(\rho P_j)\frac{P_j}{\Tr(P_j)} := \sum_{j}p_j\frac{P_j}{\Tr(P_j)},
\end{equation}
where $p_j=\Tr(\rho P_j)$.
For the coarse-grained measurement $\v Q$ obtained from classical post-processing of $\v P$, we have  
    \begin{align}\label{qji}
        \forall i,\quad\text{we have}\quad\frac{Q_i}{\Tr(Q_i)} &= \frac{\sum_{j}D_{i,j}P_j}{\Tr\left(\sum_{j}D_{i,j}P_j\right)} \nonumber\\&= \sum_{j}\frac{D_{i,j}\Tr(P_j)}{\sum_{k}D_{i,k}\Tr\left(P_k\right)}\times\frac{P_j}{\Tr(P_j)}\nonumber\\&=\sum_{j}q(j|i)\frac{P_j}{\Tr(P_j)},
    \end{align}
    where we defined
    \begin{equation}\label{q_conditional_i}
        q(j|i) := \frac{D_{i,j}\Tr(P_j)}{\sum_{k}D_{i,k}\Tr\left(P_k\right)},
    \end{equation}
   and $D$ is a column stochastic matrix. Note that $\forall i$, the quantities $q(j|i)$ given in Eq. \eqref{qji} describes a valid probability vector as  
    \begin{equation}
        \forall i,\;j \quad q(j|i) \geq 0\quad\text{and}\quad \sum_{j}q(j|i) =1.
    \end{equation}
    
    Therefore, the coarse-grained state $\rho^{(\textbf{Q})}_{cg}$, we have
    \begin{align}\label{cg_Q}
        \rho^{(\textbf{Q})}_{cg} &= \sum_{i}\Tr(\rho Q_i) \frac{Q_i}{\Tr(Q_i)}=\sum_{i,j}\Tr(\rho Q_i)q(j|i)\frac{P_j}{\Tr(P_j)}\nonumber\\ &= \sum_{j}\left(\sum_{i}\Tr(\rho Q_i)q(j|i)\right)\frac{P_j}{\Tr(P_j)} = \sum_{j}\mu_j \frac{P_j}{\Tr(P_j)},
    \end{align}
    where we define the probability vector $\v \mu$ with elements
    \begin{equation}
        \mu_j = \sum_{i}\Tr(\rho Q_i)q(j|i).
    \end{equation}
    Substituting $Q_i$ from Eq. \eqref{Q_i} we obtain, 
    \begin{align}\label{bistocastisity}
        \mu_j &= \sum_{i}\Tr(\rho Q_i)q(j|i) = \sum_{i,m}D_{i,m}\Tr(\rho P_m)q(j|i) \nonumber\\&= \sum_{m} \left(\sum_{i}D_{i,m}q(j|i)\right)\Tr(\rho P_m):=\sum_{m}B_{j,m}\Tr(\rho P_m),
    \end{align}
    where
    \begin{equation}
        B_{j,m}=\sum_{i}D_{i,m}q(j|i).
    \end{equation}
    Now, note that 
    \begin{align}
        \forall m, \quad \sum_{j}B_{j,m} &= \sum_{i,j}D_{i,m}q(j|i) \nonumber\\&= \sum_{i}D_{i,m}\sum_{j}q(j|i) = \sum_{i}D_{i,m} =1.
    \end{align}
    On the other hand,
    \begin{align}\label{above_sum}
        \forall j,\quad \sum_{m}B_{j,m} = \sum_{i,m}D_{i,m}q(j|i) &= \sum_{i,m}D_{i,m}\times \frac{D_{i,j}\Tr(P_j)}{\sum_{k}D_{i,k}\Tr\left(P_k\right)}\nonumber\\&= \sum_{i}\left(\sum_{m}D_{i,m}\right)\times \frac{D_{i,j}\Tr(P_j)}{\sum_{k}D_{i,k}\Tr\left(P_k\right)}.
    \end{align}
    If $\v P$ is a fine-grained measurement, then it is rank-one (as follows fro definition \ref{Defn_fine_grained_mmt})
    \begin{equation}
        \forall i,\quad\Tr(P_i) = 1,
    \end{equation}
    which simplifies the above expression given in Eq. \eqref{above_sum} as follows:
    \begin{equation}
         \forall j,\quad \sum_{m}B_{j,m}= \sum_{i}\left(\sum_{m}D_{i,m}\right)\times \frac{D_{i,j}}{\sum_{k}D_{i,k}}=\sum_{i}D_{i,j} =1,
    \end{equation}
    which makes the matrix $B$ with entries $B_{j,m}$ bistochastic. Furthermore, if $\v P$ is a fine grained measurement, we can see from Eq. \eqref{cg_P} and Eq. \eqref{cg_Q} that the eigenvalues of $\rho^{(\textbf{P})}_{\text{cg}}$ and $\rho^{(\textbf{Q})}_{\text{cg}}$ is given by 
    $\{\mu_j\}_j$ and $\{p_j\}_j$ i.e
    \begin{equation}
        \lambda(\rho^{(\textbf{P})}_{\text{cg}})=\v p \quad\quad \lambda(\rho^{(\textbf{Q})}_{\text{cg}})=\v \mu,
    \end{equation}
    where $\v p$ and $\v \mu$ are probability vector with entries $p_j$ and $\mu_j$, respectively. Now we have shown in Eq. \eqref{bistocastisity} that 
    \begin{equation}
         \lambda(\rho^{(\textbf{Q})}_{\text{cg}})=\v \mu=B\v p = B \lambda(\rho^{(\textbf{P})}_{\text{cg}}),
    \end{equation}
    where $B$ is a bistochastic matrix. Therefore, using Eq. \eqref{thm_majorization_equvt} 
    \begin{equation}
        \lambda(\rho^{(\textbf{P})}_{\text{cg}})\succ \lambda(\rho^{(\textbf{Q})}_{\text{cg}}),
    \end{equation}
    which proves the claim.
\end{proof}

\begin{thm}[Schur concavity of passive energy.]\label{SCofOE}
    Consider two states $\rho_1$ and $\rho_2$ described by a Hamiltonian $H = \sum_{i=1}^d E_i|E_i\rangle\langle E_i|$ with $E_{i}< E_{i+1}$ for all $i\in{1,\ldots,d-1}$. If $\lambda(\rho_1)\succ \lambda(\rho_2)$, then their passive energy obeys the following inequality:
    \begin{equation}
        \Tr(H\Pi_1)\leq\Tr(H\Pi_2),
    \end{equation}
    where $\Pi_1$ and $\Pi_2$ are passive version of the state $\rho_1$ and $\rho_2$ respectively as defined in Eq. \eqref{passive_state_energy}. Here $\lambda(\cdot)$ is the column vector constructed from the eigenvalues of matrix $\cdot$. 
\end{thm}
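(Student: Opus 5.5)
The plan is to rewrite the passive energy as a weighted sum of partial sums of the ordered spectrum. Then majorization acts term by term, with nonnegative weights. By Eq.~\eqref{pass_energy_major}, with the energies ordered increasingly, $\Tr(H\Pi_k)=\sum_{i=1}^d E_i\,\lambda_i^{\downarrow}(\rho_k)$ for $k=1,2$. Since unitary conjugation preserves the spectrum, only $x:=\lambda^{\downarrow}(\rho_1)$ and $y:=\lambda^{\downarrow}(\rho_2)$ matter, and the hypothesis is $x\succ y$.

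\textbf{Step 1 (Abel summation).} Introduce the partial sums $X_k=\sum_{i=1}^k x_i$ and $Y_k=\sum_{i=1}^k y_i$, with $X_0=Y_0=0$ and $X_d=Y_d=1$ because both are spectra of density matrices. Summation by parts gives
\begin{equation}
\sum_{i=1}^d E_i x_i = E_d X_d-\sum_{k=1}^{d-1}(E_{k+1}-E_k)X_k ,
\end{equation}
and the same identity holds with $y$ and $Y_k$ in place of $x$ and $X_k$.

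\textbf{Step 2 (subtract).} Subtracting the two identities, and using $X_d=Y_d$, yields
\begin{equation}
\Tr(H\Pi_2)-\Tr(H\Pi_1)=\sum_{k=1}^{d-1}(E_{k+1}-E_k)\,(X_k-Y_k).
\end{equation}

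\textbf{Step 3 (sign).} The hypothesis $E_k<E_{k+1}$ makes each gap $E_{k+1}-E_k$ positive. The definition of majorization gives $X_k\ge Y_k$ for $k=1,\dots,d-1$. So every term is nonnegative, which proves $\Tr(H\Pi_1)\le\Tr(H\Pi_2)$. The argument only needs $E_k\le E_{k+1}$, so degenerate spectra are covered as well.

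An alternative route goes through item 2 of Theorem~\ref{thm_majorization_equvt}: write $y=Bx$ with $B$ bistochastic, hence $y$ as a convex combination of permutations of $x$. Then use that $\sum_i E_i x_{\sigma(i)}\ge\sum_i E_i x_i$ for every permutation $\sigma$, by the rearrangement inequality (opposite orderings minimize). This also needs $\sum_i E_i y_i^{\downarrow}\le \sum_i E_i (Bx)_i$, and that follows by applying the rearrangement inequality once more to $y$.

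The main obstacle is only bookkeeping. One must check that the decreasing ordering of the spectra paired with increasing energies is exactly what Eq.~\eqref{pass_energy_major} asserts, and that the partial-sum inequalities point in the direction that makes the sum nonnegative. The Abel-summation route handles both cleanly, so I would use it as the main argument.
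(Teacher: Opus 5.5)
Your proof is correct, and your main argument takes a different route from the paper. The paper uses the route you list as your alternative. From $\lambda(\rho_1)\succ\lambda(\rho_2)$ it invokes the harder direction of Hardy--Littlewood--P\'olya to obtain a bistochastic $B$ with $B\lambda(\rho_1)=\lambda^{\downarrow}(\rho_2)$. It then writes $B=\sum_j\alpha_j\pi_j$ by Birkhoff's theorem and bounds each $\lambda(H)^T\pi_j\lambda(\rho_1)$ below by $\lambda(H)^T\lambda^{\downarrow}(\rho_1)$ using the rearrangement inequality. Because $B$ is chosen so that $B\lambda(\rho_1)$ is already sorted, the extra rearrangement step you mention for $y$ is unnecessary there.

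Your Abel-summation argument needs only the partial-sum definition of majorization. It avoids both the existence of $B$ and the Birkhoff decomposition, and it gives the exact identity $\Tr(H\Pi_2)-\Tr(H\Pi_1)=\sum_{k=1}^{d-1}(E_{k+1}-E_k)(X_k-Y_k)$. That identity quantifies the gap and shows the inequality is strict whenever the energy gaps are positive and $\lambda^{\downarrow}(\rho_1)\neq\lambda^{\downarrow}(\rho_2)$. It also makes the extension to degenerate energies immediate.

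The paper's route, in turn, exhibits the passive energy as a minimum of linear functionals over the permutation orbit of the spectrum, i.e.\ a symmetric concave function, which is the conceptual reason it is Schur-concave. It also fits the bistochastic matrices the paper builds in Lemma~\ref{fmt_lem} and Theorem~\ref{theorem3}: there one could apply Birkhoff directly, skipping majorization. Your route would instead first need the easy direction $Bp\prec p$ to get the partial-sum inequalities.
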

\begin{proof}
    Let us denote the vector constructed from the eigenvalues of $H$ as $\lambda(H)$ i.e.,
    \begin{equation}
        \lambda(H) = (E_1\; E_2\;\ldots\; E_d)^T \quad\text{where}\quad  \forall i\;\; E_{i}< E_{i+1}.
    \end{equation}
    We can write the following from $\lambda(\rho_1)\succ \lambda(\rho_2)$:
    \begin{eqnarray}
        &&\lambda(\rho_1)\succ \lambda(\rho_2)\Rightarrow \lambda(\rho_1)\succ \lambda^{\downarrow}(\rho_2) \nonumber\\&\Rightarrow& B\lambda(\rho_1)= \lambda^{\downarrow}(\rho_2) \nonumber\\&\Rightarrow& \lambda(H)^T B\lambda(\rho_1)= \lambda(H)^T\lambda^{\downarrow}(\rho_2) \nonumber\\&\Rightarrow& \lambda(H)^T \lambda^{\downarrow}(\rho_1)  \leq\lambda(H)^T B\lambda(\rho_1)= \lambda(H)^T\lambda^{\downarrow}(\rho_2)\nonumber\\
        &\Rightarrow& \Tr(H\Pi_1)\leq\Tr(H\Pi_2),
    \end{eqnarray}
    where we identify $\lambda(H)^T \lambda^{\downarrow}(\rho_2)=\Tr(H\Pi_1)$ and $\lambda(H)^T \lambda^{\downarrow}(\rho_1)=\Tr(H\Pi_2)$ as the passive energy of $\rho_1$ and $\rho_2$ from Eq. \eqref{pass_energy_major} to write the final $\Rightarrow$. To write the second $`\Rightarrow'$, we use the fact that if $\textbf{p}\succ\textbf{q}$, there exist a bistochastic matrix $B$ that connects the probability vector $\textbf{p}$ with $\textbf{q}$ in the following manner (See theorem \ref{thm_majorization_equvt})
    \begin{equation}
        B\textbf{p}=\textbf{q}.
    \end{equation}
    Here $\lambda^{\downarrow}(\cdot)$ denotes the eigenvalue of $\cdot$ arranged in non-increasing order i.e.,
    \begin{equation}
        \lambda^{\downarrow}(\cdot)_{i}\geq \lambda^{\downarrow}(\cdot)_{i+1} \quad \forall i.
    \end{equation}
    To write the inequality in the fourth $`\Rightarrow'$, we proceed by decomposing the Bistochastic matrix $B$ as a convex sum of permutations $\pi_j\in\text{Permutations}$:
    \begin{equation}
        B=\sum_{j}\alpha_j\pi_j\quad\text{where}\quad\forall j, \alpha_j\geq 0 \;\;\text{and}\;\; \sum_j\alpha_j=1.
    \end{equation}
   Since the eigenvalues of $H$ are arranged in increasing order, the following equality holds for any positive semidefinite matrix $A$:
    \begin{equation}
        \min_{\pi\in\text{Permutations}} \lambda(H)^T\pi\lambda(A) = \lambda(H)^T\lambda^{\downarrow}(A),
    \end{equation}
    as all the eigenvalues of $A$ are decreasingly ordered in the vector $\lambda^{\downarrow}(A)$. This observation allows us to write
    \begin{align}
        \lambda(H)^T \lambda^{\downarrow}(\rho_2)&= \lambda(H)^TB \lambda(\rho_1)\nonumber\\&=\sum_{j}\alpha_j\lambda(H)^T\pi_j\lambda(\rho_1)&\geq \sum_{j}\alpha_j\lambda(H)^T\lambda^{\downarrow}(\rho_1)\nonumber\\&=\lambda(H)^T\lambda^{\downarrow}(\rho_1).
    \end{align}

\end{proof}

Now it is straight-forward to prove theorem \ref{fmt}. Employling lemma \ref{fmt_lem}, we can say that if the coarse-grained measurement $\textbf{Q}$ is formed via classical post-processing of the fine-grained measurement $\textbf{P}$, then 
\begin{equation}
        \lambda(\rho_{\text{cg}}^{(\textbf{P})})\succ \lambda(\rho_{\text{cg}}^{(\textbf{Q})}).
    \end{equation}
Now employing theorem \ref{SCofOE}, we obtained 
\begin{equation}
    \Tr (H\Pi^{(\textbf{P})}_{\text{cg}}) \leq \Tr (H\Pi^{(\textbf{Q})}_{\text{cg}}),
\end{equation}
where $\Pi^{(\textbf{P})}_{\text{cg}}$ and $\Pi^{(\textbf{Q})}_{\text{cg}}$  are the passive state obtained from the coarse-grained states $\rho^{(\textbf{P})}_{\text{cg}}$ and $\rho^{(\textbf{Q})}_{\text{cg}}$ respectively. Therefore, we have the desired relation between observational ergotropy
\begin{align}
    R(\rho,\textbf{P})&=\Tr(H\rho)-\Tr (H\Pi^{(\textbf{P})}_{\text{cg}})\nonumber\\&\geq \Tr(H\rho)-\Tr (H\Pi^{(\textbf{Q})}_{\text{cg}})=R(\rho,\textbf{Q}).
\end{align}
\end{document}